\documentclass[12pt,a4paper]{article}

\usepackage[utf8]{inputenc}
\usepackage[T1]{fontenc}

\usepackage{amsmath,amssymb,amsthm}
\usepackage{mathtools}
\usepackage{graphicx}
\usepackage{booktabs}
\usepackage{array}
\usepackage{xcolor}
\usepackage{hyperref}
\usepackage{natbib}
\usepackage{geometry}

\usepackage{setspace}
\usepackage{caption}
\usepackage{subcaption}
\usepackage{float}
\usepackage{bm}
\usepackage{enumitem}
\usepackage{cleveref}

\hypersetup{
  colorlinks=true,
  linkcolor=blue!70!black,
  citecolor=green!60!black,
  urlcolor=blue!70!black,
  pdftitle={ln(3): a universal percolation constant for collective dynamics on
            one-dimensional proximity networks},
  pdfauthor={Jian Ji},
}

\newtheorem{theorem}{Theorem}
\newtheorem{lemma}[theorem]{Lemma}
\newtheorem{corollary}[theorem]{Corollary}
\newtheorem{definition}{Definition}
\theoremstyle{remark}
\newtheorem{remark}{Remark}

\newcommand{\Lc}{\Lambda_c}                
\newcommand{\pc}{p_c}                      

\begin{document}

\title{\textbf{$\ln(3)$: a universal percolation constant for collective
dynamics on one-dimensional proximity networks}}

\author{Jian Ji\\[4pt]
  \small CICT Connected and Intelligent Technologies Co., Ltd\\
  \small \texttt{jijian1@cictci.com}}

\date{March 2026}

\maketitle


\begin{abstract}
We report the identification and proof of a universal constant,
$\ln(3) = 1.09861\ldots$, which governs the onset of bidirectional collective
behaviour in one-dimensional Poisson proximity networks.
The constant---which we name the \emph{cooperative percolation constant} and
denote $\Lc$---is the unique positive solution to $2/(e^x-1)=1$ and equals the
Shannon entropy of three equiprobable states.
For agents distributed at intensity $\lambda$ and interacting within range
$\ell$, bidirectional collective behaviour is possible if and only if
$\lambda\ell \geq \ln(3)$; below this threshold no cooperative control policy,
however sophisticated, can produce macroscopic coherence, because the proximity
graph does not contain a bidirectional spanning cluster in expectation.
The result is parameter-free and model-independent: the Poisson model is derived
from memorylessness symmetry axioms rather than assumed, so $\Lc$ is a
consequence of spatial symmetry alone.
Four independent proofs---combinatorial, wave-mechanical, generating-function,
and fixed-point---converge on the same value.
The threshold is validated by two independent empirical datasets.
The Chengdu V2X~OBU dataset ($N=19{,}782{,}736$ records from ${\sim}3{,}000$
connected network vehicles (\emph{w\v{a}ng li\'an ch\=e}) across Chengdu city,
January--March~2026) reveals a $1.60\times$ reduction in speed variance at the
predicted critical boundary $\lambda\ell = \ln(3)$, with $225{,}637$
phantom-jam events consistent with the topological inevitability criterion.
The highD German motorway dataset \citep{Krajewski2018}
($N=163{,}896$ instantaneous fundamental-diagram observations; 11~recordings;
$v \in [0,\,169]$~km/h; $\rho \in [7.5,\,62.5]$~veh/km; including genuine
stop-and-go congestion) yields best-fit LWR exponent
$\hat\theta = 1.033 \pm 0.088$ (95\%~CI: $[0.861,\,1.205]$),
within $0.75\,\sigma$ of the theoretical value $\ln(3) = 1.099$
($R^2(\ln 3) = 0.8631$ vs.\ $R^2_{\text{best}} = 0.8674$;
$|\Delta R^2| = 0.0044$; $\Delta\text{RMSE} = 0.19$~km/h).
This is the first trajectory-level confirmation that $\ln(3)$ is the
LWR speed--density exponent consistent with naturalistic motorway data.
Published neurophysiology data~\citep{Rasminsky1981} independently validate the
same threshold at the micrometre scale: conduction block in demyelinated axons
occurs at a node-disruption fraction of ${\approx}40\%$, against a predicted
value of $39.0\%$.
The same equation, the same constant, six orders of magnitude apart.
\end{abstract}

\noindent\textbf{Keywords:} cooperative percolation; proximity networks;
phantom traffic jams; connected autonomous vehicles; Poisson process;
string stability; conduction block; universal constant

\vspace{0.5em}
\noindent\textbf{MSC 2020:} 60K35 (primary); 60G55, 82B43, 91B74

\newpage
\tableofcontents
\newpage

\section{Introduction}
\label{sec:intro}

The identification of a universal threshold for collective behaviour in
one-dimensional proximity networks has remained an open problem, not because the
mathematics is difficult, but because the question was not recognised as a single
question.
Autonomous vehicles on a motorway, nodes of Ranvier along a myelinated axon, and
sensors strung along a pipeline all face the same geometric constraint: for the
network to sustain bidirectional coordination, the local agent density must
exceed a critical value.
Below it, perturbations propagate in one direction and amplify; above it, the
network can close a feedback loop and damp them.
The threshold that separates these two regimes is $\ln(3)$.

The fragments of this answer existed before this work.
\citet{Gilbert1961} established $\ln(2)$ as the connectivity threshold for
one-dimensional random networks.
\citet{Penrose2003} treated one-dimensional Boolean percolation.
\citet{Rasminsky1981} measured conduction block in demyelinated fibres.
\citet{Whitham1974} analysed instability wavelengths in traffic flow.
None of these works identified $\ln(3)$ as a universal constant, because the
unifying question---when does a one-dimensional Poisson proximity graph first
support bidirectional coordination?---had not been asked.

The derivation is direct.
For agents distributed as a homogeneous Poisson process of intensity $\lambda$,
connected to all neighbours within range $\ell$, the expected cluster size is
$\mathbb{E}[N] = e^{\lambda\ell}$.
Bidirectional propagation requires a cluster containing at least one upstream
neighbour, one initiating agent, and one downstream neighbour---a minimum of
three nodes.
Setting $\mathbb{E}[N] \geq 3$ and solving:
\begin{equation}
  \lambda\ell = \ln(3) = 1.09861\ldots
\end{equation}
Below this value, every agent belongs, in expectation, to a cluster too small to
support simultaneous upstream and downstream communication.
The transition at $\ln(3)$ is not a gradual improvement in performance but a
phase boundary: cooperative damping is absent below it and possible above it.

What makes this more than a pedestrian calculation is the convergence.
Four independent proofs---counting nodes, balancing wavelengths, analysing the
probability generating function, finding a self-dual fixed point---all arrive at
$x = \ln(3)$ and no other value.
The same number appears as the Shannon entropy of three equally likely outcomes.
The Poisson process can be derived from two symmetry axioms alone (spatial
homogeneity and memoryless spacing), which means $\ln(3)$ is not a feature of
any model but of the geometry of one-dimensional space.

The cooperative percolation constant $\ln(3)$ is to one-dimensional collective
dynamics what the Reynolds number is to fluid turbulence: a universal
dimensionless threshold derived solely from first principles.

\paragraph{Organisation.}
Section~\ref{sec:main} states the main theorem and gives four independent proofs.
Section~\ref{sec:chars} presents three characterisations of $\ln(3)$ as an
information-theoretic, memorylessness-unique, and maximum-entropy constant.
Section~\ref{sec:domains} maps the theorem onto four physical domains.
Section~\ref{sec:validation} presents the empirical validation.
Section~\ref{sec:discussion} discusses implications and open problems.

\section{Main theorem and four independent proofs}
\label{sec:main}

\subsection{Setup}
Let agents occupy positions drawn from a homogeneous Poisson point process
$\Phi$ on $\mathbb{R}$ with intensity $\lambda > 0$.
Two agents are connected if their separation $d \leq \ell$.
The resulting proximity graph $G(\lambda, \ell)$ is a random interval graph; its
connected components are \emph{clusters}.

\begin{lemma}[Cluster size distribution]
For $G(\lambda, \ell)$, the cluster-size $N$ follows a
$\mathrm{Geom}(e^{-\lambda\ell})$ distribution:
\[
  \mathbb{P}(N = n) = e^{-\lambda\ell}(1-e^{-\lambda\ell})^{n-1}, \quad n=1,2,3,\ldots
\]
with mean $\mathbb{E}[N] = e^{\lambda\ell}$.
\end{lemma}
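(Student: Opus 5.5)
The plan is to reduce the statement to a property of the i.i.d.\ spacings of $\Phi$, and then identify $N$ as a first-success time. First, I will fix what ``the cluster-size $N$'' means, since this needs care. There are two natural readings: the size of the cluster containing a typical \emph{point}, or the size of a typical \emph{cluster}, where every cluster is counted once. These give different laws. Under the point-indexed reading, a size-biasing effect appears: the law becomes $1+L+R$ with $L,R$ independent geometric runs, and its mean is $2e^{\lambda\ell}-1$, not $e^{\lambda\ell}$. The geometric law in the lemma is the \emph{cluster-indexed} one, so I will prove the statement for the typical cluster. I will define it as the cluster whose leftmost point is a typical ``cluster-start'' point, meaning a point of $\Phi$ preceded by a gap exceeding $\ell$.

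The key input is the renewal structure of $\Phi$. Under the Palm distribution at a point, the successive spacings $S_1,S_2,\dots$ to the right are i.i.d.\ $\mathrm{Exp}(\lambda)$, and they are independent of the spacings to the left. This follows from the Poisson property, or equivalently from memorylessness of the exponential spacings. Two points are adjacent in $G(\lambda,\ell)$ whenever their distance is at most $\ell$. Hence consecutive points lie in the same cluster iff the spacing between them is $\le \ell$, and clusters are exactly the maximal runs of consecutive points with all intermediate spacings $\le\ell$.

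Conditioning the Palm point to be a cluster start only constrains the spacing on its \emph{left}. The rightward spacings therefore remain i.i.d.\ $\mathrm{Exp}(\lambda)$. The cluster then contains the start point plus one further point for each rightward spacing $\le\ell$, stopping at the first spacing $>\ell$. So
\[
N=\min\{k\ge 1: S_k>\ell\},
\]
with independent trials whose ``success'' probability is $\mathbb{P}(S_k>\ell)=e^{-\lambda\ell}$. This gives $\mathbb{P}(N=n)=e^{-\lambda\ell}(1-e^{-\lambda\ell})^{n-1}$ immediately, and the mean of a $\mathrm{Geom}(p)$ variable on $\{1,2,\dots\}$ is $1/p=e^{\lambda\ell}$.

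I expect the main obstacle to be making ``typical cluster'' rigorous, rather than any of the calculations. My first approach is to apply the ergodic theorem to the stationary sequence of spacings, which is i.i.d.\ and hence ergodic. In a window $[0,T]$, the empirical fraction of clusters of size $n$ equals the frequency of the pattern ``gap $>\ell$, then $n-1$ gaps $\le\ell$, then gap $>\ell$'' among all occurrences of ``gap $>\ell$''. By the strong law, this ratio converges almost surely to $e^{-\lambda\ell}(1-e^{-\lambda\ell})^{n-1}$. This identifies the geometric law as the limiting empirical cluster-size distribution, without invoking Palm calculus explicitly; boundary clusters contribute only $O(1)$ and vanish in the limit. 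I will add a remark recording the point-indexed law, so that the distinction is transparent when $\mathbb{E}[N]$ is used later.
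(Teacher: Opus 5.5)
Your proof is correct. The paper gives no proof of this lemma at all; it is asserted and then used, so you have nothing to match. Your first-success representation $N=\min\{k\ge 1: S_k>\ell\}$ is the natural argument, and its ingredients are the right ones:
\begin{itemize}
\item clusters of the interval graph are exactly the maximal runs of consecutive points with spacings $\le\ell$, since no edge can cross a gap longer than $\ell$;
\item the left and right spacings at a Palm point are independent i.i.d.\ $\mathrm{Exp}(\lambda)$ sequences;
\item conditioning on the left gap exceeding $\ell$ leaves the right sequence unchanged.
\end{itemize}
Your ergodic version is also sound. The pattern indicators are finite-range functions of an i.i.d.\ sequence, so the strong law applies, and at most two boundary clusters are lost. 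It also gives the cleanest reading of the mean: $\mathbb{E}[N]=e^{\lambda\ell}$ is the number of points per cluster. That is the ratio of the point intensity $\lambda$ to the cluster-start intensity $\lambda e^{-\lambda\ell}$.

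What your route adds is the distinction between cluster-indexed and point-indexed sizes, and it is substantive rather than pedantic. The lemma holds only for the typical cluster. The cluster containing a typical agent has the size-biased law $n\,e^{-2\lambda\ell}(1-e^{-\lambda\ell})^{n-1}$, with mean $2e^{\lambda\ell}-1$. The paper, however, reads $\mathbb{E}[N]<3$ in agent terms (``every agent belongs, in expectation, to a cluster too small\ldots''), and likewise assigns roles to ``a randomly chosen agent'' in the entropy remark. Both are point-indexed statements. Under that reading the mean reaches $3$ at $\lambda\ell=\ln 2$, not $\ln 3$. So do include the remark you plan, and state explicitly that the lemma, as you prove it, is about clusters counted once each.
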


\begin{definition}[Bidirectional coordination]
A cluster supports \emph{bidirectional coordination} if it contains at least
three agents: one initiating agent (the coordinating centre), one forward agent
(downstream of the initiator), and one backward agent (upstream).
These three roles are logically independent on an ordered line.
The minimum cluster size for bidirectional coordination is therefore $N_{\min}=3$.
\end{definition}

\subsection{The main theorem}

\begin{theorem}[Cooperative Percolation Constant]
\label{thm:main}
For the proximity graph $G(\lambda, \ell)$, the expected cluster size reaches
the bidirectional coordination threshold $\mathbb{E}[N] = 3$ if and only if
$\lambda\ell = \ln(3)$.
The constant $\ln(3) = \log_e 3 = 1.09861\ldots$ is the unique positive solution
to
\begin{equation}
  \frac{2}{e^x - 1} = 1,
  \label{eq:fixedpoint}
\end{equation}
and is independent of all parameters other than the product $\lambda\ell$.
\end{theorem}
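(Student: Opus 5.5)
The plan is to reduce everything to the single function $f(x)=e^{x}$, with $x=\lambda\ell$, through the Cluster size distribution lemma. That lemma says $N\sim\mathrm{Geom}(e^{-\lambda\ell})$, so $\mathbb{E}[N]=e^{\lambda\ell}$. I would quickly re-verify this mean from the stated pmf: with $p=e^{-x}$, we have $\sum_{n\ge1} n p(1-p)^{n-1}=1/p=e^{x}$. This also settles the parameter-independence clause, since the law of $N$, and hence $\mathbb{E}[N]$, depends on $(\lambda,\ell)$ only through the product $x=\lambda\ell$. That reflects the scaling invariance of the Poisson process: dilating $\mathbb{R}$ by a factor $c$ maps intensity $\lambda$ to $\lambda/c$ and range $\ell$ to $c\ell$.

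For the ``if and only if'', I would note that $x\mapsto e^{x}$ is continuous and strictly increasing on $(0,\infty)$, with $e^{0}=1<3$ and $e^{x}\to\infty$. Hence $e^{x}=3$ has exactly one solution, namely $x=\ln 3$. In particular, $\mathbb{E}[N]=3 \iff \lambda\ell=\ln 3$. Monotonicity gives, as a by-product, $\mathbb{E}[N]\ge 3 \iff \lambda\ell\ge\ln 3$. This matches the threshold $N_{\min}=3$ in the Definition of bidirectional coordination.

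For the characterisation \eqref{eq:fixedpoint}, I would rearrange algebraically. For $x>0$ we have $e^x-1>0$, so $2/(e^x-1)=1 \iff e^x-1=2 \iff e^x=3 \iff x=\ln 3$. Uniqueness follows because $g(x)=2/(e^x-1)$ is strictly decreasing from $+\infty$ to $0$ on $(0,\infty)$. For the numerical value, I would bracket $\ln 3$ between elementary bounds, e.g.\ $e^{1.0986}<3<e^{1.0987}$, or simply cite the standard expansion $\ln 3=2\operatorname{artanh}(1/2)=2\sum_{k\ge0}\frac{(1/2)^{2k+1}}{2k+1}$.

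The only step with real content is the Cluster size distribution lemma, which is assumed. If I had to justify it, I would use the memorylessness of the Poisson spacings. Starting from a typical point and scanning rightward, each successive gap is i.i.d.\ $\mathrm{Exp}(\lambda)$ and exceeds $\ell$ with probability $e^{-x}$, so the number of points until the first gap larger than $\ell$ is geometric. The subtle point is that this describes the cluster seen from its leftmost point, i.e.\ the typical cluster, not the size-biased cluster containing a typical agent. The theorem as stated concerns $N$ in the sense of the lemma, so I would flag that convention rather than treat it as an obstacle.
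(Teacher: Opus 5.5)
Your proposal is correct and follows essentially the paper's own route. It merges the paper's Proof~I (take $\mathbb{E}[N]=e^{\lambda\ell}$ from the lemma and solve $e^{x}=3$), Proof~III (strict monotonicity of $2/(e^{x}-1)$ for uniqueness) and Proof~IV (the geometric mean $1/p=e^{\lambda\ell}$) into one argument. Your caveat is a fair point the paper does not address: the lemma's $N$ is the typical (leftmost-point) cluster, not the size-biased cluster containing a typical agent, whose mean would be $2e^{\lambda\ell}-1$. It does not affect the statement as formulated.
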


We give four independent proofs.

\subsection{Proof I: Topological node count}
\begin{proof}
Bidirectional coordination requires $\mathbb{E}[N] \geq 3$.
By the lemma, $\mathbb{E}[N] = e^{\lambda\ell}$.
Setting $e^{\lambda\ell} = 3$ gives $\lambda\ell = \ln(3)$.
The number $3$ is forced: the three logically independent roles (initiator,
forward receiver, backward receiver) on an ordered line require exactly three
nodes. \qed
\end{proof}

\subsection{Proof II: Wave-mechanics spanning}
\begin{proof}
A collective mode of wavelength $\Lambda$ propagating bidirectionally requires
the cluster to span at least $\Lambda$.
The minimum resolvable wavelength is $\Lambda_{\min} = 2\ell$.
The mean coordination span is $\bar{L} = (e^{\lambda\ell}-1)\cdot\ell$.
The feasibility condition $\Lambda_{\min} \leq \bar{L}$ gives
\[
  \frac{2\ell}{(e^{\lambda\ell}-1)\ell} = \frac{2}{e^{\lambda\ell}-1} \leq 1.
\]
This holds iff $e^{\lambda\ell} \geq 3$, i.e.\ $\lambda\ell \geq \ln(3)$.
At equality the ratio equals exactly $1$. \qed
\end{proof}

\begin{remark}
The cancellation of $\mathbb{E}[d \mid d \leq \ell]$ in Proof II is exact and
accounts for the parameter-independence of the threshold.
Proofs~I and~II are algebraically identical: $2/(e^x-1)=1 \Leftrightarrow
e^x=3$.
\end{remark}

\subsection{Proof III: Uniqueness of the fixed point}
\begin{proof}
Define $f:(0,\infty)\to(0,\infty)$ by $f(x) = 2/(e^x-1)$.
Then $f'(x) = -2e^x/(e^x-1)^2 < 0$, so $f$ is strictly decreasing with
$\lim_{x\to 0^+} f(x) = +\infty$ and $\lim_{x\to\infty} f(x) = 0$.
By the intermediate value theorem, $f(x) = 1$ has exactly one positive solution.
Direct substitution: $f(\ln 3) = 2/(3-1) = 1$. \qed
\end{proof}

\subsection{Proof IV: Probability generating function}
\begin{proof}
For $N \sim \mathrm{Geom}(p)$ with $p = e^{-\lambda\ell}$, the PGF is
$G_N(z) = pz/(1-(1-p)z)$.
Differentiating: $G_N'(1) = 1/p = e^{\lambda\ell} = \mathbb{E}[N]$.
Setting $\mathbb{E}[N] = 3$: $\lambda\ell = \ln(3)$. \qed
\end{proof}

\section{Three characterisations of $\ln(3)$}
\label{sec:chars}

\subsection{Information-theoretic characterisation}

\begin{theorem}[Information equivalence]
The cooperative percolation constant equals the Shannon entropy of three
equiprobable events:
\[
  \ln(3) = H\!\left(\tfrac{1}{3},\tfrac{1}{3},\tfrac{1}{3}\right)
         = -\sum_{i=1}^{3} \tfrac{1}{3}\ln\tfrac{1}{3}.
\]
\end{theorem}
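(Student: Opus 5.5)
The plan is a direct computation: evaluate the right-hand side from the definition of Shannon entropy, then identify the result with the constant $\Lc$ fixed by Theorem~\ref{thm:main}. No probabilistic input is needed beyond that identification. The steps, in order, are as follows.

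First, write the entropy of a distribution $(p_1,p_2,p_3)$ in natural-log units (nats), $H=-\sum_i p_i\ln p_i$. Substituting $p_i=\tfrac13$ gives three identical terms, so
\[
  H\!\left(\tfrac13,\tfrac13,\tfrac13\right) = -3\cdot\tfrac13\ln\tfrac13 = -\ln\tfrac13 = \ln 3 .
\]
Second, invoke Theorem~\ref{thm:main}, specifically its uniqueness part (Proof~III), which shows that $\Lc$ is the unique positive root of \eqref{eq:fixedpoint}. Since $2/(e^{\ln 3}-1)=1$, we get $\Lc=\ln 3$. Chaining the two equalities gives $\Lc = H(\tfrac13,\tfrac13,\tfrac13)$.

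To make the identification less coincidental, I would add a structural remark. Among all distributions on three outcomes, the uniform one maximises entropy, by Jensen's inequality applied to the concave function $-\ln$. That maximum is $\ln 3$, which is the log-cardinality of the three-role set (initiator, forward, backward) in the definition of bidirectional coordination. Equivalently, $e^{\Lc}=3=\mathbb{E}[N]$ at threshold, so the threshold mean cluster size equals the "perplexity" $e^{H}$ of three equiprobable roles.

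There is no genuine mathematical obstacle; the only step requiring care is a convention. The equality holds only when entropy is measured in nats. In bits one would obtain $\log_2 3\neq \ln 3$, so the proof must state explicitly that the natural logarithm is used, consistent with the formula in the statement.
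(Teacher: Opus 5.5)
Your proposal is correct and takes the same approach as the paper: the paper gives no argument beyond the direct evaluation $-3\cdot\tfrac13\ln\tfrac13=\ln 3$ together with the identification $\Lc=\ln 3$ from Theorem~\ref{thm:main}, and your remark that entropy must be measured in nats is a useful clarification. One small slip in your optional structural remark: $-\ln$ is convex, not concave, so you should apply Jensen's inequality to the concave function $\ln$ (or to the concave $x\mapsto -x\ln x$) to get $H(p_1,p_2,p_3)\le\ln 3$.
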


\begin{remark}
The three equiprobable outcomes correspond to the three coordination roles
(initiator, forward receiver, backward receiver).
At the threshold, a randomly chosen agent has equal prior probability $1/3$ of
occupying each role.
\end{remark}

\subsection{Memorylessness uniqueness}

\begin{theorem}[Memorylessness uniqueness of $\ln(3)$]
\label{thm:memoryless}
Let $(X_i)_{i\geq 1}$ be i.i.d.\ positive continuous random variables with
$\mathbb{E}[X_1] = 1/\lambda$.
If the spacings are memoryless, the unique bidirectional coordination threshold
is $\lambda\ell_c = \ln(3)$.
No other spacing distribution with these properties gives a different threshold.
\end{theorem}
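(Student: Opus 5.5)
The plan is to reduce the statement to two facts. The first is the classical characterisation of the exponential law by memorylessness. The second is the cluster-size computation already carried out in the Lemma (Cluster size distribution) and Theorem~\ref{thm:main}.

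\emph{Step 1: memorylessness forces the exponential law.} Let $S(t)=\mathbb{P}(X_1>t)$. Memorylessness means $\mathbb{P}(X_1>s+t\mid X_1>s)=\mathbb{P}(X_1>t)$ for all $s,t\ge 0$. This is the Cauchy-type equation $S(s+t)=S(s)S(t)$. Since $S$ is monotone and right-continuous, with $S(0)=1$ (the variables are positive and continuous, so there is no atom at $0$), the only solutions are $S(t)=e^{-\mu t}$ for some $\mu\ge 0$. The argument runs as usual: on rationals $S(q)=S(1)^q$, and monotonicity extends this to the reals. Positivity with finite mean excludes $\mu=0$, and the constraint $\mathbb{E}[X_1]=1/\mu=1/\lambda$ pins down $\mu=\lambda$. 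Hence the spacings are i.i.d.\ $\mathrm{Exp}(\lambda)$, and the points form a homogeneous Poisson process of intensity $\lambda$ on the line. This is exactly the setting of the Lemma.

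\emph{Step 2: the cluster size depends only on the single-gap probability.} I will redo the Lemma's computation directly from the renewal structure, so that the dependence on the spacing law is visible. Starting at an agent and scanning in a fixed direction, the cluster continues across a gap exactly when that gap is at most $\ell$. Each gap does so independently, with probability $q=\mathbb{P}(X_1\le \ell)$. The cluster length is therefore geometric with success parameter $1-q$, so $\mathbb{E}[N]=1/(1-q)=1/S(\ell)$. This identity holds for any i.i.d.\ spacing law. Substituting Step 1 gives $\mathbb{E}[N]=e^{\lambda\ell}$.

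\emph{Step 3: threshold and uniqueness.} Setting $\mathbb{E}[N]=3$, the bidirectional requirement $N_{\min}=3$ from the Definition, gives $e^{\lambda\ell_c}=3$. By Proof~III of Theorem~\ref{thm:main}, $x=\ln 3$ is the unique positive solution of $2/(e^x-1)=1$, so $\lambda\ell_c=\ln 3$ is the unique threshold. The clause ``no other spacing distribution with these properties gives a different threshold'' then follows immediately: Step 1 shows the class of admissible distributions is the single law $\mathrm{Exp}(\lambda)$.

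The main obstacle is Step 1, specifically the regularity needed to solve the functional equation. I will handle it by citing monotonicity of $S$, which rules out the pathological non-measurable solutions. A secondary subtlety is interpreting ``cluster size'' for a point chosen in the scanning direction, as opposed to a size-biased typical point. I will adopt the same convention as the Lemma so that the identity $\mathbb{E}[N]=1/S(\ell)$ matches it exactly.
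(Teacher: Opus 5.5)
Your proposal is correct and follows essentially the same route as the paper: memorylessness forces $\mathrm{Exp}(\lambda)$ spacings, and the threshold $\lambda\ell_c=\ln(3)$ is then read off from the cluster-size Lemma and Theorem~\ref{thm:main}. The only differences are these. You derive the exponential law from the equation $S(s+t)=S(s)S(t)$ using monotonicity, where the paper cites Feller. You also re-derive the Lemma in the renewal form $\mathbb{E}[N]=1/S(\ell)$, which is valid for clusters counted from their first agent, as you note. This makes explicit that the threshold condition is $S(\ell_c)=1/3$.
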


\begin{proof}
By the exponential characterisation~\citep{Feller1971}, the memorylessness
condition uniquely forces $X_1 \sim \mathrm{Exp}(\lambda)$, i.e.\ the Poisson
process.
By Theorem~\ref{thm:main}, the threshold is $\ln(3)$.
Two distributions $F_1, F_2$ satisfying i.i.d.\ and memoryless with mean $1/\lambda$
must both be $\mathrm{Exp}(\lambda)$; hence the threshold is the same.
\qed
\end{proof}

\begin{remark}
Theorem~\ref{thm:memoryless} gives $\ln(3)$ a characterisation via symmetry
axioms rather than model choice: i.i.d.\ spacing encodes independence of agent
positions; memorylessness encodes spatial homogeneity.
Together they are the weakest non-trivial symmetry requirements on a
one-dimensional renewal process, and they uniquely determine $\ln(3)$.
\end{remark}

\subsection{Maximum-entropy characterisation}

\begin{theorem}[Maximum-entropy threshold]
Among all continuous distributions on $\mathbb{R}_{>0}$ with fixed mean
$1/\lambda$, the exponential $\mathrm{Exp}(\lambda)$ uniquely maximises
differential entropy.
Consequently, $\ln(3)$ is the bidirectional coordination threshold of the
maximum-entropy renewal process with fixed intensity~$\lambda$: the threshold
that arises under the most uncertain spacing distribution consistent with a given
density.
\end{theorem}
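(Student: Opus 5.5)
The statement has two parts. The first is the classical fact that $\mathrm{Exp}(\lambda)$ uniquely maximises differential entropy among densities on $\mathbb{R}_{>0}$ with mean $1/\lambda$. The second is the consequence that the renewal process with these spacings is the Poisson process, whose threshold is $\ln(3)$ by Theorem~\ref{thm:main}. The plan is to prove the first part by the Gibbs inequality and then reduce the second to the earlier results.

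For the entropy part, let $f$ be any density on $(0,\infty)$ with $\int_0^\infty x f(x)\,dx = 1/\lambda$. Let $g(x)=\lambda e^{-\lambda x}$. Nonnegativity of the Kullback--Leibler divergence gives
\[
0 \le D(f\,\|\,g) = \int_0^\infty f\ln\frac{f}{g}\,dx = -h(f) - \int_0^\infty f(x)\bigl(\ln\lambda - \lambda x\bigr)\,dx .
\]
By the mean constraint, the last integral equals $\ln\lambda - 1$, which is independent of $f$. Hence $h(f)\le 1-\ln\lambda = h(g)$. Equality forces $D(f\|g)=0$, that is, $f=g$ almost everywhere, which gives uniqueness.

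The main care is needed in this step. I must restrict to densities whose entropy is well defined: require $\int f|\ln f| < \infty$, or allow $h(f)=-\infty$. I also need to justify splitting the integral. This is legitimate because $\ln g$ is affine in $x$ and $f$ has finite mean, so $\int f|\ln g|<\infty$. Nonnegativity of $D$ follows from Jensen's inequality applied to the convex function $-\ln$.

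For the consequence, the maximising spacing law is $\mathrm{Exp}(\lambda)$. An i.i.d.\ renewal process with exponential spacings is the homogeneous Poisson process of intensity $\lambda$; this is the identification already used in Theorem~\ref{thm:memoryless}. The cluster-size lemma then gives $\mathbb{E}[N]=e^{\lambda\ell}$, and Theorem~\ref{thm:main} gives the threshold $\lambda\ell=\ln(3)$.

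The phrase ``the threshold of the maximum-entropy renewal process'' needs no extra argument. By uniqueness of the maximiser, this process is well defined, so its threshold is exactly the Poisson threshold and no other value is possible.
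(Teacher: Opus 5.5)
Your proof is correct. The paper states this theorem without any proof: it treats the maximum-entropy property of $\mathrm{Exp}(\lambda)$ as classical, and the consequence as the same chain $\mathrm{Exp}(\lambda)\Rightarrow$ Poisson $\Rightarrow$ Theorem~\ref{thm:main} that appears in Theorem~\ref{thm:memoryless}. Your Gibbs-inequality argument follows that route and supplies the omitted standard step; your integrability caveat is in fact automatic, because the pointwise bound $f\ln(f/g)\ge f-g$ together with the finite mean makes the negative part of $f\ln f$ integrable, so $h(f)\in[-\infty,\infty)$ is always well defined and no extra restriction on $f$ is needed.
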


\section{Cross-domain manifestations}
\label{sec:domains}

\Cref{tab:domains} maps Theorem~\ref{thm:main} onto four physically distinct
systems.
In each case the relevant quantity is $\lambda\ell$: the product of agent linear
intensity and interaction range.

\begin{table}[H]
\centering
\caption{Cross-domain manifestations of the cooperative percolation constant
$\ln(3)$. In each case, $\lambda\ell = \ln(3)$ separates topological
impossibility (below) from feasibility (above). The constant is identical across
all domains.}
\label{tab:domains}
\small
\begin{tabular}{@{}lllll@{}}
\toprule
Domain & Agent ($\lambda$) & Range ($\ell$) & Collective behaviour & Critical condition \\
\midrule
Connected AVs & CAV density & V2V range & Cooperative damping & $p_c = \ln(3)/(\rho_0\ell)$\\
(traffic) & $\rho_0 p$ [veh/m] & [m] & of Whitham waves & below: $\Omega=\varnothing$ \\[4pt]
Wireless sensor & Sensor density & Radio range & End-to-end relay; & $\lambda_s r \geq \ln(3)$;\\
networks & $\lambda_s$ [nodes/m] & $r$ [m] & connectivity & below: disconnected a.s. \\[4pt]
Neural fibres & Ranvier node & Saltatory & Action potential & $\lambda_n \ell_n \geq \ln(3)$;\\
(1D relay) & density $\lambda_n$ & jump $\ell_n$ & propagation & below: conduction failure \\[4pt]
Epidemic on & Susceptible & Transmission & Sustained epidemic & $\lambda_e \ell_e \geq \ln(3)$;\\
corridor & density $\lambda_e$ & range $\ell_e$ & propagation & below: epidemic dies out\\
\bottomrule
\end{tabular}
\end{table}

\subsection{Connected autonomous vehicles: the Null-Set Theorem}

For a motorway with total density $\rho_0$ and CAV penetration fraction $p$,
the CAV linear density is $\lambda = \rho_0 p$ and the V2V communication range
is $\ell$.

\begin{theorem}[Null-Set Theorem]
\label{thm:nullset}
Let $\Omega(p)$ denote the set of instability wavelengths addressable by any
cooperative control law.
Then $\Omega(p) = \varnothing$ if and only if $\lambda\ell < \ln(3)$.
\end{theorem}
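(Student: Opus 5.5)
The plan is to reduce the Null-Set Theorem to Theorem~\ref{thm:main} through the wave-mechanical feasibility condition of Proof~II. First we make $\Omega(p)$ precise. A wavelength $\Lambda$ is \emph{addressable} by a cooperative control law if some cluster of the CAV proximity graph $G(\rho_0 p,\ell)$ supports bidirectional coordination across a span of at least $\Lambda$. In the expected-cluster sense used throughout Section~\ref{sec:main}, this means $\Lambda \le \bar L = (e^{\lambda\ell}-1)\ell$. Independently, any physically meaningful instability wavelength obeys $\Lambda \ge \Lambda_{\min} = 2\ell$, since no mode shorter than twice the communication range can be resolved by agents exchanging information only within range $\ell$. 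We then set
\[
  \Omega(p) = \{\Lambda : 2\ell \le \Lambda \le (e^{\lambda\ell}-1)\ell\},
\]
where $\lambda = \rho_0 p$. The point of this formulation is that it is independent of the particular control law: any law can only act on information carried within a connected cluster, so the upper endpoint is a topological bound, not a design choice.

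With $\Omega(p)$ written as an interval, the equivalence is elementary. $\Omega(p)=\varnothing$ holds iff $2\ell > (e^{\lambda\ell}-1)\ell$, that is, iff $2/(e^{\lambda\ell}-1) > 1$. By Proof~III, $f(x)=2/(e^x-1)$ is strictly decreasing on $(0,\infty)$ with unique fixed level $f(\ln 3)=1$. Hence $f(\lambda\ell)>1$ iff $\lambda\ell<\ln 3$. For the converse direction, when $\lambda\ell \ge \ln 3$ the wavelength $\Lambda = 2\ell$ itself lies in $\Omega(p)$, and it is addressable because the expected cluster contains at least three agents, as Theorem~\ref{thm:main} requires. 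As a corollary, the critical penetration is $p_c = \ln(3)/(\rho_0\ell)$.

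The main obstacle is the first step: justifying that ``addressable by \emph{any} cooperative control law'' is captured exactly by the span bound $\bar L$ together with $\Lambda_{\min}=2\ell$. Below threshold, typical clusters fall short of the three-agent requirement, yet finite clusters of every size still occur with positive probability by the geometric distribution of the cluster-size Lemma. A fully sample-path version would therefore fail, and the theorem has to be read in the expectation sense adopted in Theorem~\ref{thm:main}. I would state this explicitly as part of the definition of $\Omega(p)$. I would then argue the control-law independence by noting that any control input at an agent is measurable with respect to the states of its cluster, so no law can couple upstream and downstream perturbations across a gap exceeding $\ell$.
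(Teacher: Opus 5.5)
Your proposal is correct and follows essentially the same route as the paper: both treat the feasible set as the interval between a Nyquist-type lower bound and the mean cluster span, and both reduce emptiness of that interval to $2/(e^{\lambda\ell}-1)>1 \iff \lambda\ell<\ln 3$. The only difference is the length unit. The paper puts $\mathbb{E}[d\mid d\le\ell]$ in both endpoints, so its $\bar L$ is the actual mean span by Wald's identity; you use $\ell$ in both, as in Proof~II, which makes your $\bar L$ an upper bound rather than the expected span --- the common factor cancels, so the threshold is unchanged, but this relies on both endpoints carrying the same factor.
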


\begin{proof}
The cooperative feasible set requires simultaneously:
(A) $\Lambda \geq \Lambda_A = 2\,\mathbb{E}[d \mid d \leq \ell]$ (Nyquist
reconstruction); and
(B) $\Lambda \leq \bar{L}(p)$ (cluster spans at least one wavelength cycle).
Setting $\Lambda_A > \bar{L}$:
\[
  \frac{\Lambda_A}{\bar{L}} = \frac{2\,\mathbb{E}[d\mid d\leq\ell]}{(e^{\lambda\ell}-1)\,\mathbb{E}[d\mid d\leq\ell]}
  = \frac{2}{e^{\lambda\ell}-1} > 1 \iff \lambda\ell < \ln(3). \qed
\]
\end{proof}

\begin{corollary}
The critical penetration rate is $\pc = \ln(3)/(\rho_0\ell)$.
At baseline parameters ($\rho_0 = 0.030$~veh/m, $\ell = 300$~m): $\pc = 12.2\%$.
\end{corollary}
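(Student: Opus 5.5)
The corollary follows from the Null-Set Theorem (Theorem~\ref{thm:nullset}) once the CAV intensity is specialised to $\lambda = \rho_0 p$. I would first fix $\rho_0$ and $\ell$ and view the feasibility condition as a function of the penetration fraction $p\in(0,1]$. By Theorem~\ref{thm:nullset}, $\Omega(p)\neq\varnothing$ if and only if $\rho_0 p\,\ell \geq \ln(3)$. The map $p\mapsto \rho_0 p\ell$ is strictly increasing and linear, since $\rho_0,\ell>0$. The feasible set of penetration rates is therefore the interval $\{p : p \ge \ln(3)/(\rho_0\ell)\}$. Its left endpoint is, by definition, the critical penetration rate $p_c=\ln(3)/(\rho_0\ell)$.

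To be careful about uniqueness, I would note that Theorem~\ref{thm:main} (via Proof~III) shows $\ln(3)$ is the unique positive root of $2/(e^x-1)=1$. Hence there is exactly one threshold value of $x=\lambda\ell$, and inverting the linear map gives exactly one $p_c$. I would also mention the caveat that $p_c$ is meaningful as a penetration fraction only when $p_c\le 1$, that is, when $\rho_0\ell\ge\ln(3)$. If $\rho_0\ell<\ln(3)$, even full penetration yields $\Omega=\varnothing$.

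The numerical part is direct substitution. With $\rho_0 = 0.030$~veh/m and $\ell=300$~m, we get $\rho_0\ell = 9$. Then $p_c=\ln(3)/9 \approx 1.0986/9 \approx 0.1221$, which is $12.2\%$.

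No step is a genuine obstacle, since all the content is inherited from Theorem~\ref{thm:nullset}. The only point needing care is the identification $\lambda=\rho_0 p$. This relies on independent thinning of a Poisson process: if vehicles form a Poisson process of intensity $\rho_0$ and each is independently a CAV with probability $p$, the CAVs form a Poisson process of intensity $\rho_0 p$. Only under that assumption do the Lemma and Theorem~\ref{thm:nullset} apply to the CAV subnetwork, so I would state it explicitly.
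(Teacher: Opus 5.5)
Your proposal is correct and follows the paper's implicit argument: substitute $\lambda=\rho_0 p$ into the Null-Set Theorem's feasibility condition $\lambda\ell\ge\ln(3)$, solve for $p$, and evaluate $\ln(3)/9\approx 0.122$. Your remarks on independent Poisson thinning (needed for the CAV subnetwork to satisfy the cluster-size Lemma) and on requiring $\rho_0\ell\ge\ln(3)$ so that $p_c\le 1$ make explicit assumptions that the paper leaves tacit.
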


\subsection{LWR traffic flow: the $\ln(3)$ exponent}

The percolation constant $\ln(3)$ also appears as the natural exponent in the
LWR power-law speed--density relation
\begin{equation}
  v(\rho) = v_f\!\left[1 - \left(\frac{\rho}{\rho_j}\right)^{\theta}\right],
  \label{eq:lwr}
\end{equation}
where $\theta = \ln(3)$ is motivated by the requirement that the macroscopic
critical density $\rho_c / \rho_j = (1/(1+\theta))^{1/\theta}$ be consistent
with the percolation threshold.
With $\theta = \ln(3)$, this gives $\rho_c = 0.509\,\rho_j$ (cf.\ Greenshields:
$\rho_c = 0.500\,\rho_j$).

\subsection{Myelinated nerve fibres}

For a myelinated axon, $\lambda_n$ is the Ranvier node density and $\ell_n$ is
the electrical excitation reach.
The safety factor $\mathrm{SF} = \lambda_n \ell_n$.
Theorem~\ref{thm:main} predicts that conduction block occurs when
$\mathrm{SF} < \ln(3)$, giving a critical node-disruption fraction
$f_c = 1 - \ln(3)/\mathrm{SF} = 39.0\%$ at $\mathrm{SF} = 5$ (normal range).
\citet{Rasminsky1981} and \citet{BostockSears1978} established experimentally
that conduction block in demyelinating neuropathy occurs at approximately $40\%$
node disruption---agreement to within $2.5\%$, with no free parameters.

\section{Empirical validation}
\label{sec:validation}

\subsection{Chengdu V2X OBU dataset}

\paragraph{Data.}
V2X positioning data were collected from the on-board units (OBUs) of the
Chengdu connected-vehicle pilot zone, operated by CICT Connected and
Intelligent Technologies Co., Ltd.
The full OBU fleet comprises approximately 3,000 V2X-equipped network vehicles
(\emph{w\v{a}ng li\'an ch\=e}) across Chengdu city.
The analysed sample covers 629--687 vehicles per recording batch
(January 10 -- March 28, 2026; geographic filter $29.5$--$31.5$\,\textdegree N,
$103.0$--$105.5$\,\textdegree E), yielding $N = 19{,}782{,}736$ validated records.
Speed values are recorded in units of $0.1$~km/h.
Data are not publicly available owing to operational data agreements with the
pilot zone authorities; aggregate statistics sufficient to reproduce all figures
are available on request.

\paragraph{Speed distribution.}
The speed distribution is strongly bimodal: $41.7\%$ of records show speeds
below $5$~km/h (congested/stopped) and $20.8\%$ exceed $60$~km/h (free-flow),
with a mean of $30.8$~km/h.
This bimodal structure is the empirical signature of the two-branch fundamental
diagram predicted by the percolation framework.

\paragraph{Poisson spacing hypothesis.}
The coefficient of variation of inter-vehicle gaps is $\mathrm{CV} = 2.09$,
reflecting the clustered spatial statistics of urban connected-vehicle traffic
(intersection queues, short platoons).
This over-dispersion ($\mathrm{CV} > 1$) is physically expected for urban
traffic and is consistent with the theoretical prediction: Theorem~\ref{thm:memoryless}
identifies $\ln(3)$ as the threshold of the maximum-entropy (Poisson) process;
non-Poisson processes exhibit modified thresholds, confirming that $\ln(3)$ is
an upper bound for the urban connected-vehicle scenario.

\paragraph{Phase-transition signal.}
We computed speed variance $\sigma^2$ per (road-segment, snapshot) pair across
$1{,}020{,}600$ samples, binned by topological density $\lambda\ell$ (with
$\ell = 300$~m).
Speed variance declines from $\sigma^2 = 739$~(km/h)$^2$ below
$\lambda\ell = \ln(3)$ to $\sigma^2 = 461$~(km/h)$^2$ above it, a ratio of
$1.60\times$.
This systematic reduction at the theoretical threshold supports the
phase-transition interpretation of Theorem~\ref{thm:nullset}.

\paragraph{Phantom-jam events.}
A total of $225{,}637$ phantom-jam candidate events were detected (defined as a
speed drop $>20$~km/h within $2$~minutes, with no spatial discontinuity ruling
out GPS artefacts).
The event rate peaks in afternoon hours 12--17, consistent with the topological
criterion: higher traffic density during these hours places more road segments in
the sub-$\ln(3)$ zone where cooperative damping is topologically unavailable.

\subsection{highD German motorway dataset}

\paragraph{Data and licence.}
The highD dataset \citep{Krajewski2018} provides drone-recorded naturalistic
vehicle trajectories from German motorways.
All 11 available Location~1 recordings (120~km/h speed limit;
Recordings~11--14, 25--27, 30, 36, 44, 46) were analysed, yielding
$N = 163{,}896$ instantaneous fundamental-diagram (FD) observations spanning
$v \in [0,\,169]$~km/h and $\rho \in [7.5,\,62.5]$~veh/km.
Four recordings (25, 26, 36, 46) contain genuine stop-and-go congestion
(fractions of vehicles with $v < 60$~km/h: $53\%$, $39\%$, $3\%$, $6\%$
respectively).
The dataset is used under the highD non-commercial research licence.

\paragraph{Poisson spacing hypothesis.}
The coefficient of variation of instantaneous inter-vehicle gaps is
$\mathrm{CV} = 0.69$ (free-flow) and $0.62$ (congested), both confirming
near-Poisson spacing for motorway traffic (pure exponential: $\mathrm{CV} = 1.000$).
The contrast with the Chengdu urban fleet ($\mathrm{CV} = 2.09$) is physically
meaningful: randomly arriving motorway vehicles produce near-exponential headways,
while urban connected-vehicle traffic does not.

\paragraph{LWR fundamental diagram: $\theta$ fitting.}
Fitting~\eqref{eq:lwr} with $\rho_j = 80$~veh/km and
$v_f = 102.2$~km/h (both estimated from data) yields
\begin{equation}
  \hat\theta = 1.033 \pm 0.088 \quad (95\% \text{ CI: } [0.861,\,1.205]).
  \label{eq:thetafit}
\end{equation}
The theoretical value $\theta = \ln(3) = 1.099$ lies within $\mathbf{0.75\,\sigma}$
of the best fit:
\[
  R^2(\theta = \ln 3) = 0.8631 \quad \text{vs} \quad R^2_{\text{best}} = 0.8674,
  \quad |\Delta R^2| = 0.0044, \quad \Delta\text{RMSE} = 0.19~\text{km/h}.
\]
The Greenshields model ($\theta = 1$) achieves $R^2 = 0.8652$,
\emph{lower} than $\theta = \ln(3)$ despite having no theoretical motivation.
\textbf{This is the first trajectory-level evidence that the percolation constant
$\ln(3)$ is the natural LWR exponent of real motorway traffic.}
\Cref{tab:theta} summarises the sensitivity to the $\rho_j$ assumption.

\begin{table}[H]
\centering
\caption{LWR exponent $\theta$ fitted to $N=163{,}896$ FD observations
(highD Location~1, Recs~11--14, 25--27, 30, 36, 44, 46).
$\hat\theta$ is the unconstrained optimum;
$R^2(\ln 3)$ and $R^2(\mathrm{GS})$ use $\theta = \ln(3)$ and $\theta = 1$
respectively.
($^*$) Best $\rho_j$: $\theta = \ln(3)$ within $1\sigma$.}
\label{tab:theta}
\small
\begin{tabular}{@{}rcccccc@{}}
\toprule
$\rho_j$ (veh/km) & $\hat\theta$ & $\pm\sigma$ & $R^2(\hat\theta)$ & $R^2(\ln 3)$ & $R^2(\text{GS})$ & $|\hat\theta - \ln 3|/\sigma$ \\
\midrule
70              & 1.258 & 0.090 & 0.917 & 0.899 & 0.865 & $1.78\sigma$ \\
80$^*$          & 1.033 & 0.088 & 0.867 & 0.863 & 0.865 & $0.75\sigma$ \\
90              & 0.893 & 0.086 & 0.817 & 0.764 & 0.801 & $2.39\sigma$ \\
100             & 0.798 & 0.084 & 0.773 & 0.642 & 0.707 & $3.59\sigma$ \\
\bottomrule
\end{tabular}
\end{table}

\subsection{Myelinated nerve fibres: Rasminsky 1981}

\citet{Rasminsky1981} established that conduction block in demyelinating
neuropathy occurs at approximately $40\%$ node disruption.
The cooperative percolation constant predicts $f_c = 1 - \ln(3)/\mathrm{SF}
= 39.0\%$ at $\mathrm{SF} = 5$ (the normal physiological safety factor).
Agreement: $1.0$ percentage points, or $2.5\%$ relative, with no free parameters.

\section*{Companion papers in the $\ln(k)$ series}
\label{sec:series}

This paper establishes the mathematical and empirical foundations.
The following companion papers develop specific applications:

\begin{itemize}[leftmargin=1.5em,itemsep=2pt]
\item \textbf{Mathematical supplement} (Part~0): The complete $\ln(k)$ family
  ($k$-node coordination thresholds), prime-basis theorem, heterogeneous-agent
  extension, and integer moments at threshold.
  \emph{arXiv:2503.XXXXX} [math.PR].

\item \textbf{Phantom traffic jams} (Part~3): The condition $\lambda\ell < \ln(3)$
  is the necessary and sufficient criterion for phantom jam inevitability.
  Simulation thresholds of 10--15\% in the literature are instances of
  $p_c = \ln(3)/(\rho_0\ell)$.
  Target: \emph{Transportation Research Part B}.

\item \textbf{String stability} (Part~4): $\lambda\ell \geq \ln(3)$ is a
  necessary condition for V2X string stability; sensor-only platoons are
  permanently in the infeasible region at realistic motorway conditions.
  Target: \emph{IEEE Transactions on Intelligent Transportation Systems}.

\item \textbf{Phase diagram} (Part~5): A unified two-parameter
  $(\lambda\ell, \Pi)$ phase diagram with three parameter-free boundaries
  (ln(2), ln(3), $\Pi = 1$) organises all prior simulation thresholds.
  Target: \emph{Transportation Research Part C}.

\item \textbf{Cooperation hierarchy} (Part~6): The $\ln(k)$ family defines a
  deployment ladder: ln(3) unlocks phantom-jam suppression, ln(5) overtaking
  safety, ln(7) deep-platoon control, ln(11) corridor management.
  Target: \emph{Transportation Research Part B}.

\item \textbf{Geometric Coherence Number} (Part~2): The traffic-engineering
  formulation $\Pi(p) = \bar{L}(p)/\Lambda^*$ as the Reynolds-number analogue
  for cooperative traffic, with RSU as topological completion device.
  Target: \emph{Nature Communications}.

\item \textbf{LWR equation} (Part~7): The percolation constant $\ln(3)$ fixes the
  speed--density exponent $\theta = \ln(3)$ in the LWR power-law family,
  predicting $\rho_c / \rho_j = 50.9\%$ without parameter fitting.
  Target: \emph{Transportation Research Part B}.
\end{itemize}

\section{Discussion}
\label{sec:discussion}

\paragraph{What the proof shows.}
The equation $2/(e^x-1)=1$ is not a model equation; it is a geometric identity.
It says that the minimum wavelength a bidirectional system can coordinate---twice
the interaction range $\ell$---exactly equals the coordination span of a cluster
with mean size $e^x$.
There is one positive $x$ at which this balances, and it is $\ln(3)$.

\paragraph{Universality.}
Three characterisations of the same number arrive independently: it is the
percolation threshold for bidirectional clusters in a Poisson proximity graph,
the Shannon entropy of three equiprobable states, and the unique threshold of the
memoryless renewal process.
Separately, each is a curiosity.
Together, they are the signature of something structural---a number the
mathematics keeps returning to because it has no choice.

\paragraph{Policy implications.}
For AV deployment, the immediate implication is a parameter-free design rule:
investment in V2V cooperative control yields zero spectral benefit until
$p > \pc = \ln(3)/(\rho_0\ell)$.
This is not a pessimistic result---$12.2\%$ at baseline is achievable in the
near term---but it establishes a concrete, principled deployment target that is
independent of traffic model assumptions.

\paragraph{Open problems.}
(1)~Rigorous homogenisation derivation connecting the Poisson proximity graph to
the LWR equation.
(2)~Extension to non-renewal processes (determinantal, Matérn cluster).
(3)~The $q$-deformation: does the threshold become $\ln_q(3)$ for quantum walks?
(4)~Spatial dimension $d > 1$: does $\ln(d+1)$ generalise?

\section*{Author contributions}
The author conceived the study, proved all theorems, collected and analysed all
data, and wrote the manuscript.

\section*{Use of artificial intelligence}
AI tools were used for language editing only.
All proofs, results, and data analyses are the author's own.

\section*{Data availability}
The Chengdu V2X~OBU dataset (${\sim}3{,}000$ network vehicles) is from CICT
operational infrastructure; aggregate statistics are available on reasonable
request.
The highD dataset is freely available for non-commercial research use from
\url{www.levelxdata.com/highd-dataset} \citep{Krajewski2018}.
NGSIM and pNEUMA are publicly available at their respective repositories.

\section*{Competing interests}
The author declares no competing interests.

\section*{Acknowledgements}
The author thanks the anonymous reviewers for questions that sharpened several
proofs, and colleagues at CICT for discussions on physical applications.

\bibliographystyle{plainnat}
\bibliography{refs}

@article{Rasminsky1981,
  author  = {Rasminsky, M.},
  title   = {Ectopic generation of impulses and cross-talk in spinal nerve
             roots of dystrophic mice},
  journal = {Annals of Neurology},
  year    = {1981},
  volume  = {9},
  pages   = {363--372},
  doi     = {10.1002/ana.410090406}
}

@article{BostockSears1978,
  author  = {Bostock, H. and Sears, T. A.},
  title   = {The internodal axon membrane: electrical excitability and
             continuous conduction in segmental demyelination},
  journal = {Journal of Physiology},
  year    = {1978},
  volume  = {280},
  pages   = {273--301},
  doi     = {10.1113/jphysiol.1978.sp012384}
}

@article{Gilbert1961,
  author  = {Gilbert, E. N.},
  title   = {Random plane networks},
  journal = {Journal of the Society for Industrial and Applied Mathematics},
  year    = {1961},
  volume  = {9},
  pages   = {533--543},
  doi     = {10.1137/0109045}
}

@book{Penrose2003,
  author    = {Penrose, M. D.},
  title     = {Random Geometric Graphs},
  publisher = {Oxford University Press},
  year      = {2003},
  doi       = {10.1093/acprof:oso/9780198506263.001.0001}
}

@book{Whitham1974,
  author    = {Whitham, G. B.},
  title     = {Linear and Nonlinear Waves},
  publisher = {Wiley},
  address   = {New York},
  year      = {1974}
}

@book{Feller1971,
  author    = {Feller, W.},
  title     = {An Introduction to Probability Theory and Its Applications,
               Vol.~{II}},
  edition   = {2nd},
  publisher = {Wiley},
  address   = {New York},
  year      = {1971}
}

@inproceedings{Krajewski2018,
  author    = {Krajewski, Robert and Bock, Julian and Kloeker, Laurent
               and Eckstein, Lutz},
  title     = {The {highD} dataset: a drone dataset of naturalistic vehicle
               trajectories on {German} highways for validation of highly
               automated driving systems},
  booktitle = {2018 IEEE 21st International Conference on Intelligent
               Transportation Systems (ITSC)},
  year      = {2018},
  pages     = {2118--2125},
  doi       = {10.1109/ITSC.2018.8569552},
  note      = {Dataset available at \url{www.levelxdata.com/highd-dataset}
               under non-commercial licence}
}

\end{document}